\documentclass[10pt,conference]{IEEEtran}

\usepackage[letterpaper, bottom=0.95in, top=0.75in, left=0.625in, right=0.625in]{geometry}

\usepackage{amsmath,graphicx}
\usepackage{bbm}
\usepackage{amsmath, amssymb, amscd, amsthm, amsfonts}
\usepackage{subfiles}
\usepackage{color}

\usepackage{hyperref}
\pdfoutput=1
\usepackage[caption=false,font=footnotesize]{subfig}

\usepackage[backend=biber, style=ieee, dashed=false,isbn=false,doi=false,url = false,bibencoding=utf8,citestyle=numeric-comp]{biblatex}
\usepackage[linesnumbered,ruled,vlined]{algorithm2e}
\usepackage[noend]{algpseudocode}

\SetKwInput{KwInput}{Input}                
\SetKwInput{KwOutput}{Output}              

\SetCommentSty{mycommfont}

\newtheorem{theorem}{Theorem}
\newtheorem{lemma}{Lemma}

\newtheorem{definition}{Definition}

\title{Fundamental Limits of 1-bit ISAC Systems: Capacity Region and Optimal Power Control}

\author{%
  \IEEEauthorblockN{Emmanuel Trinidad\IEEEauthorrefmark{1}\IEEEauthorrefmark{2}, Neil Irwin Bernardo\IEEEauthorrefmark{1}}
  \IEEEauthorblockA{\IEEEauthorrefmark{1}Electrical and Electronics Engineering Institute, University of the Philippines Diliman, Quezon City, Philippines \\
  \IEEEauthorrefmark{2}Department of Electronics Engineering, Pampanga State University, Pampanga, Philippines }
  \IEEEauthorblockA{
      Email: \{emmanuel.trinidad,  neil.bernardo\}@eee.upd.edu.ph
   }
}

\begin{document}
\maketitle
\begin{abstract}
This paper investigates the fundamental limits of integrated sensing and communication (ISAC) systems with 1-bit receiver quantization. We analyze a Gaussian fading ISAC channel with separate communication and monostatic sensing links, where both communication and sensing receivers are equipped with 1-bit quantizers. When the communication channel state information (CSI) is available at the receiver, we characterize the communication–sensing capacity region of 1-bit ISAC channel and show that no trade-off exists between communication and sensing performance. In particular, both communication and sensing capacities can be simultaneously achieved by a constant-amplitude input distribution with a specific rotational symmetry. For the scenario where communication CSI is also available at the transmitter, we formulate a weighted optimization problem that balances communication and sensing rates in 1-bit ISAC channel under an average power constraint and then derive the corresponding optimal power control policy. The results demonstrate how the optimal power control policy evolves with the weighting parameter, transitioning from a communication-centric, opportunistic transmission to a more uniform allocation as sensing becomes increasingly prioritized.


\end{abstract}
\begin{IEEEkeywords}
Integrated sensing and communication, channel capacity, analog-to-digital conversion
\end{IEEEkeywords}
\section{Introduction}
\label{section:intro}

Integrated sensing and communication (ISAC) has emerged as a key enabling technology for 6G wireless systems, aiming to unify sensing and communication functionalities within a common hardware platform, signaling framework, and spectral resources. Traditionally, radar and communication systems have been designed and operated independently. In contrast, 6G envisions tightly integrated dual-function systems capable of simultaneously supporting data transmission and environmental sensing tasks \cite{Gosh:2025}.

From an information-theoretic perspective, recent studies have characterized the fundamental trade-offs between communication and sensing performance in terms of achievable rate and distortion metrics, showing that joint optimization of the input distribution can yield performance gains over radar-centric or communication-centric input design \cite{ahmadipour_2024}. Mutual information (MI) has also been proposed as a unifying performance metric for communication and sensing tasks \cite{Ouyang_2023}. In particular, sensing mutual information (SMI) quantifies the amount of information that the received signal conveys about the underlying environment or target parameters, and shares the same units and key properties as communication mutual information (CMI). This unified MI-based framework enables a direct comparison and joint optimization of communication and sensing objectives within a common information-theoretic formulation. Within this framework, the trade-off between random and deterministic signaling in ISAC systems has also been revealed in \cite{Liu:2023}. Furthermore, waveform design strategies based on weighted combinations of CMI and SMI have been developed for multi-antenna ISAC systems, highlighting the trade-offs between communication and sensing objectives \cite{Piao_2023}.

In parallel with ISAC developments, the increasing demand for high data rates and wider bandwidths has led to significant challenges in hardware cost and power consumption. Low-resolution analog-to-digital converters (ADCs), particularly 1-bit ADCs, have emerged as a promising solution due to their energy efficiency, albeit at the cost of reduced information fidelity \cite{Liu_Jun:2019}. A substantial body of work has characterized capacity and optimal input distributions under quantized outputs across various channel models, including real-valued channels \cite{singh_2009}, complex fading channels \cite{Krone_2010}, non-coherent Rician channels \cite{vuminh2019_optsig}, and phase-quantized systems \cite{bernardo2022_phasequant}. More recently, low-resolution quantization has also been incorporated into ISAC system design, with optimization frameworks accounting for detection performance and quality-of-service requirements \cite{salman2024_1-bitDAC_ISAC,wang2024massive}. However, despite these advances, the structure of the capacity-achieving input distribution for 1-bit ISAC systems remains largely unexplored, even in the single-antenna setting.

In this paper, we adopt the unified MI-based framework to characterize both communication and sensing performance of 1-bit ISAC systems under varying levels of channel state information (CSI) availability. When the CSI is only available at the communication receiver, we show that both communication and sensing capacities can be simultaneously achieved by a constant-amplitude input distribution with $\frac{\pi}{2}$-rotational symmetry, indicating the absence of the communication and sensing performance trade-off in the 1-bit ISAC setting. When CSI is also available at the transmitter, optimizing a weighted combination of communication mutual information (CMI) and sensing mutual information (SMI) reveals intriguing dynamics in the optimal power allocation strategy.

The remainder of this paper is organized as follows. Section \ref{section:sys_model} presents the 1-bit Gaussian fading ISAC channel. Section \ref{section:capacity_region} introduces the MI-based framework and characterizes the communication–sensing capacity region when the communication CSI is only available at the communication receiver. Section \ref{subsection:P_opt_formulation}  considers the CSIT scenario and formulates the joint communication–sensing optimization problem, along with the derivation of the optimal power control policy. Section \ref{subsection:numerical_results} provides numerical results that illustrate the structure of the optimal policy and the resulting trade-offs between communication and sensing performance. Finally, Section \ref{section:conclusion} concludes the paper.

\emph{Notation:} We use $\mathbb{C}$, $\mathbb{R}$, and $\mathbb{Z}$ to denote the sets of complex numbers, real numbers, and integers, respectively. Uppercase $X$ denotes a random variable while lowercase $x$ denotes its realization. For any $v\in\mathbb{C}$, $\Re\{v\}$ and $\Im\{v\}$ give the real and imaginary components of $v$. When it is clear from the context, we use $F_{X}$ and $f_{X}$ to denote the cumulative distribution function (CDF) $F_{X}(x)$ and probability density function (PDF) $f_{X}(x)$, respectively. Equivalent notation can be used for the
conditional and joint distributions. We write the probability
mass function (PMF) of a random variable Y as $p_Y(y; F_X)$ if
the distribution is induced by a distribution
$F_X$. For instance, $p_Y(y; F_X) = \int p_{Y|X}(y|x)\;dF_{X}$ for some condition PMF $p_{Y|X}$. We use $\mathbb{H}(\cdot)$ and $\mathbb{I}(*;\cdot)$ to denote the entropy and mutual information, respectively.  The output entropy and conditional output entropy of $Y$ induced by some input distribution $F_X$ are denoted by $\mathbb{H}_{F_X}(Y)$ and $\mathbb{H}_{F_X}(Y|X)$, respectively. All $\log()$ terms are base 2 unless otherwise stated. We define the operator $[x]^+ = \max(0,x)$. We use the operator $\mathrm{sgn}\{x\}$ to extract the sign of $x$. Lastly, we use $\mathcal{N}_{\mathbb{C}}(\mu,\sigma^2)$ to denote a circular-symmetric complex Gaussian (CSCG) distribution with mean $\mu$ and variance $\sigma^2$.


\section{System Model}
\label{section:sys_model}

\begin{figure}[t!]
\centering
    \includegraphics[width=1.01\linewidth]{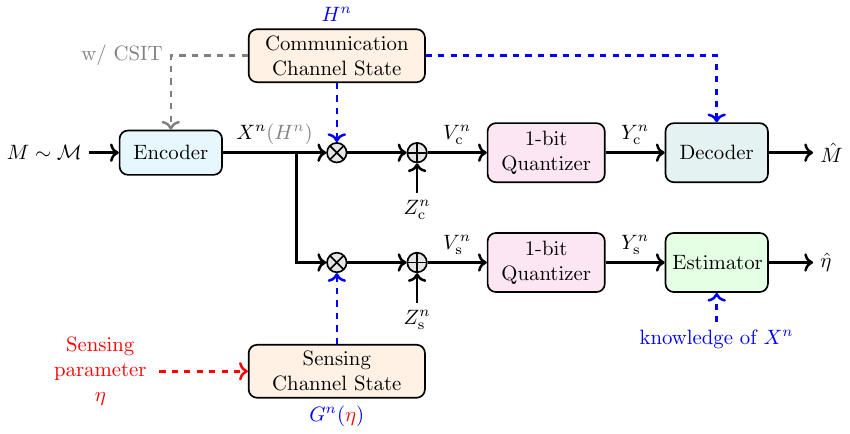}
    \caption{System model of the 1-bit ISAC channel. Two CSI scenarios are considered: (1) CSI is known only at the receiver, and (2) CSI is available at both transmitter and receiver. Under scenario 2, the encoder can adapt its transmission according to the fading process $H_{\mathrm{c}}^n$.}
    \label{fig:isac_monostatic}
\end{figure}

We consider the 1-bit ISAC system illustrated in Fig.~\ref{fig:isac_monostatic}. The ISAC transmitter aims to convey a message $M$, drawn uniformly from the message set $\mathcal{M} = \{1,2,\cdots,|\mathcal{M}|\}$, to a communication receiver over a fading channel using $n$ channel uses. The encoder produces a complex-valued codeword, denoted by $X^n \in \mathbb{C}^n$, that satisfies the average power constraint
\begin{align}
    \frac{1}{n}\sum_{i = 1}^{n}|X_i|^2 \leq P.
\end{align}
This codeword undergoes an i.i.d.\ fading process $H_{\mathrm{c}}^n$ and is corrupted by an additive noise $Z_{\mathrm{c}}^n$. Specifically, the fading coefficients $\{H_{\mathrm{c},i}\}_{i = 1}^{n}$ and the noise samples $\{Z_{\mathrm{c},i}\}_{i = 1}^{n}$ are assumed to be CSCG, i.e., $H_{\mathrm{c},i}\sim\mathcal{N}_{\mathbb{C}}(0,1)$ and $Z_{\mathrm{c},i}\sim\mathcal{N}_{\mathbb{C}}(0,\sigma_{\mathrm{c}}^2)$.


At the same time, the reflected signal corresponding to the transmitted codeword $X^n$ is observed by a monostatic sensing receiver, which aims to estimate a parameter of interest (POI) $\eta$ that characterizes the wireless environment. We assume that there is a one-to-one mapping between the POI and the i.i.d. fading process, denoted by $H_{\mathrm{s}}^n$, such that all uncertainty on $\eta$ is captured by $H_{\mathrm{s}}^n$. Similar to the communication channel, the fading coefficients and the noise samples of the sensing channel are assumed to be CSCG, i.e., $H_{\mathrm{s},i}\sim \mathcal{N}_{\mathbb{C}}(0,1)$ and $Z_{\mathrm{s},i}\sim \mathcal{N}_{\mathbb{C}}(0,\sigma_{\mathrm{s}}^2)$.

The resulting unquantized input-output relationships of the ISAC channel can be written as
\begin{align}\label{eq:unquantized_sig}
    V_{\mathrm{c},i} =& H_{\mathrm{c},i}\cdot X_i + Z_{\mathrm{c},i}\\
    V_{\mathrm{s},i} =& H_{\mathrm{s},i}\cdot X_i + Z_{\mathrm{s},i},
\end{align}

In this work, we focus on the regime where both the communication receiver and the monostatic sensing receiver are equipped with 1-bit quantizers operating independently on the in-phase and quadrature components. Accordingly, decoding and estimation are performed based on the quantized sequences $Y_{\mathrm{c}}^n$ and $Y_{\mathrm{s}}^n$, respectively, whose $i$-th elements are given by
\begin{align}\label{eq:quantized_sig}
    Y_{\mathrm{c},i} =& \mathcal{Q}_{\mathrm{1bit}}(V_{\mathrm{c},i}) = \mathcal{Q}_{\mathrm{1bit}}(H_{\mathrm{c},i}\cdot X_i+Z_{\mathrm{c},i})\\
    Y_{\mathrm{s},i} =& \mathcal{Q}_{\mathrm{1bit}}(V_{\mathrm{s},i}) = \mathcal{Q}_{\mathrm{1bit}}(H_{\mathrm{s},i}\cdot X_i+Z_{\mathrm{s},i}),
\end{align}
where the 1-bit quantization operator $\mathcal{Q}_{\mathrm{1bit}}\left(\cdot\right)$ is defined as
\begin{align}\label{eq:1bit_quant}
    \mathcal{Q}_{\mathrm{1bit}}\left(\cdot\right) = \mathrm{sgn}\left(\Re\left\{\cdot\right\}\right) + j\cdot\mathrm{sgn}\left(\Im\left\{\cdot\right\}\right).
\end{align}
The output of $\mathcal{Q}_{\mathrm{1bit}}\left(\cdot\right)$ can be one of the following values in the set $\mathcal{Y}$:
\begin{align}
    \mathcal{Y} = \{\underbrace{1+j}_{y_{0}},\;\underbrace{-1+j}_{y_{1}},\;\underbrace{-1-j}_{y_{2}},\;\underbrace{1-j}_{y_{3}}\}.
\end{align}
Intuitively, the 1-bit I/Q quantizer outputs $y_{l} = \mathcal{Q}_{\mathrm{1bit}}(v)$ whenever $v$ falls in the $(l+1)$-th quadrant of the I/Q plane. From this problem description we are led to the following question: \emph{Which pairs of communication and sensing rates are achievable in the 1-bit ISAC channel?}

We answer this question for two CSI scenarios. In the first scenario, we assume CSI at the receiver only (CSIR), where the communication receiver has perfect knowledge of the fading coefficients $H_{\mathrm{c}}^n$ while the transmitter has access only to their statistical distribution. Consequently, the encoder cannot adapt its transmission strategy to the instantaneous channel realizations, and reliable communication must be achieved based solely on receiver-side channel knowledge. Under this scenario, we derive the communication-sensing capacity region of the 1-bit ISAC system and identify the modulation scheme that achieves the Pareto boundary of this region.

In the second scenario, we consider CSI at the transmitter (CSIT), where the transmitter is additionally assumed to know the realizations of the communication channel fading process $H_{\mathrm{c}}^n$. This enables the encoder to adapt the transmitted codeword to the instantaneous channel state. Under this scenario, we investigate how the relative weights between sensing and communication objectives influence the optimal power allocation strategy of the ISAC transmitter.

\section{Communication-Sensing Capacity Region of 1-bit ISAC System under CSIR}
\label{section:capacity_region}

In this section, we first present the probability distributions that are relevant in the formulation of the CMI and SMI of the 1-bit ISAC channel. We then characterize the communication-sensing capacity region of the 1-bit ISAC channel assuming the communication CSI is known at the communication receiver.

\subsection{Communication and Sensing Transition Probabilities}
We first examine the transition probabilities of the communication channel and sensing channel. To simplify the exposition, we introduce the channel identifier $\mathrm{a} \in \{\mathrm{c},\mathrm{s}\}$ to select between communication and sensing channel and consider the input-output relationship
\begin{align}
    Y_{\mathrm{a}} = \mathcal{Q}_{\mathrm{1bit}}\left(V_{\mathrm{a}}\right) = \mathcal{Q}_{\mathrm{1bit}}\left(H_{\mathrm{a}}\cdot X + Z_{\mathrm{a}}\right).
\end{align}

Given $X$ and $H_{\mathrm{a}}$, the (unquantized) channel output $V_{\mathrm{a}}$
is a CSCG whose conditional PDF can be written as
 \begin{align}
    f_{V_{\mathrm{a}}|H_{\mathrm{a}},X}\left(v_{\mathrm{a}}|h_{\mathrm{a}},x\right) = \frac{1}{\pi\sigma_{\mathrm{a}}^2}\cdot\exp\left(-\frac{|v_{\mathrm{a}} - h_{\mathrm{a}}x|^2}{\sigma_{\mathrm{a}}^2}\right),
\end{align}
with $\sigma_{\mathrm{a}}^2$ being either $\sigma_{\mathrm{c}}^2$ or $\sigma_{\mathrm{s}}^2$.
To describe the channel laws under 1-bit I/Q quantization, we first define the function
  \begin{align}\label{eq:gen_channel_law_1bit}
     W^{(\mathrm{a})}_{y_l}(z) = Q\left(-\frac{ \Re\{z\}\Re\{y_{l}\}}{\sigma_{a}/\sqrt{2}}\right) \cdot Q\left(-\frac{ \Im\{z\}\Im\{y_{l}\}}{\sigma_{a}/\sqrt{2}}\right),  
\end{align}
where $z\in \mathbb{C}$, and $Q(\cdot)$ is the Gaussian $Q$-function, i.e., the tail probability of the standard Gaussian distribution. Intuitively, $W^{(\mathrm{a})}_{y_l}(z)$ gives the probability that $Y_{\mathrm{a}} = y_l$ given that $z$ is sent over an additive white Gaussian noise (AWGN) channel with variance $\sigma_{\mathrm{a}}^2$. Note that the two $Q$-function terms correspond to $\mathbb{P}\left(\Re\{Y_{\mathrm{a}}\} = \Re\{y_l\}| Z = z\right)$ and $\mathbb{P}\left(\Im\{Y_{\mathrm{a}}\} = \Im\{y_l\}|Z = z\right)$, respectively. This factorization implies that the $\Re\{Y_{\mathrm{a}}\}$ and $\Im\{Y_{\mathrm{a}}\}$ are conditionally independent given $Z$.

Using \eqref{eq:gen_channel_law_1bit}, the communication channel law and sensing channel law under 1-bit I/Q quantization can be written as $W_{y_l}^{(\mathrm{c})}(h_{\mathrm{c}}\cdot x)$ and $W_{y_l}^{(\mathrm{s})}(h_{\mathrm{s}}\cdot x)$, respectively. Consequently, given some input distribution $F_{X}$, the PMF of the 1-bit communication channel output $Y_{\mathrm{c}}$ (conditioned on $H_{\mathrm{c}}$) is
\begin{align}
    p_{Y_{\mathrm{c}}|H_{\mathrm{c}}}(y_{l}|h_{\mathrm{c}};F_{X}) =& \int_{\mathbb{C}}W_{y_{l}}^{(\mathrm{c})}(h_{\mathrm{c}}\cdot x)\;\mathrm{d}F_{X}
\end{align}
while the PMF of the 1-bit sensing channel output $Y_{\mathrm{s}}$ (conditioned on $X$) is
\begin{align}
    p_{Y_{\mathrm{s}}|X}(y_{l}|x) =& \int_{\mathbb{C}}W_{y_{l}}^{(\mathrm{s})}(h_{\mathrm{s}}\cdot x)\;\mathrm{d}F_{H_{\mathrm{s}}}.
\end{align}

\subsection{Communication and Sensing Mutual Information}

The achievable communication rate $R_{\mathrm{c}}$ of the communication link, assuming CSI is available at the receiver, is described by the CMI whose expression can be written as
\begin{align}
    \mathbb{I}\left(X;Y_{\mathrm{c}}|H_{\mathrm{c}}\right) = \mathbb{H}_{F_{X}}\left(Y_{\mathrm{c}}|H_{\mathrm{c}}\right) - \mathbb{H}_{F_{X}}\left(Y_{\mathrm{c}}|H_{\mathrm{c}},X\right).
\end{align}
For notational convenience, let us define the function $\xi(x) = -x\log x$. Then, the first entropy term can be expressed as
\begin{align}
    \mathbb{H}_{F_{X}}\left(Y_{\mathrm{c}}|H_{\mathrm{c}}\right) = \int_{\mathbb{C}}\sum_{l = 0}^{3}\xi\left(p_{Y_{\mathrm{c}}|H_{\mathrm{c}}}(y_{l}|h_{\mathrm{c}};F_{X})\right)\;\mathrm{d}F_{H_{\mathrm{c}}} \label{eqn:H(Yc|H)}
\end{align}
while the second entropy term can be expressed as
\begin{align}
\mathbb{H}_{F_{X}}\left(Y_{\mathrm{c}}|H_{\mathrm{c}},X\right) = \iint\limits_{\mathbb{C}\;\mathbb{C}}\sum_{l = 0}^{3}\xi\left(W_{y_{l}}^{(\mathrm{c})}(h_{\mathrm{c}} x)\right)\mathrm{d}F_{H_{\mathrm{c}}}\mathrm{d}F_{X}. \label{eqn:H(Y_c|H,X)}
\end{align}
Similarly, we adopt the sensing rate $R_{\mathrm{s}}$ as the performance metric for the sensing task. This sensing rate is described by the SMI, which is the mutual information between the 1-bit sensing channel output $Y_{\mathrm{s}}$ and the fading coefficient of the sensing channel $H_{\mathrm{s}}$. Various distortion metrics have close connection to the SMI, making it a suitable measure of sensing performance \cite{Liu:2023}. The SMI can be written explicitly as
\begin{align}
     \mathbb{I}\left(\eta;Y_{\mathrm{s}}|X\right) =& \mathbb{I}\left(H_{\mathrm{s}};Y_{\mathrm{s}}|X\right)\nonumber\\
     =&\mathbb{H}_{F_{X}}\left(Y_{\mathrm{s}}|X\right) - \mathbb{H}_{F_{X}}\left(Y_{\mathrm{s}}|H_{\mathrm{s}},X\right),
\end{align}
where the first entropy term can be expressed as
\begin{align}
\mathbb{H}_{F_{X}}\left(Y_{\mathrm{s}}|X\right)= \int_{\mathbb{C}}\sum_{l = 0}^{3}\xi\left(p_{Y_{\mathrm{s}}|X}(y_{l}|x)\right)\mathrm{d}F_{X} \label{eqn:H(Ys|X)}
\end{align}
while the second entropy term can be expressed as
\begin{align}
\mathbb{H}_{F_{X}}\left(Y_{\mathrm{s}}|H_{\mathrm{s}},X\right)= \iint\limits_{\mathbb{C}\; \mathbb{C}}\sum_{l = 0}^{3}\xi\left(W_{y_{l}}^{(\mathrm{s})}(h_{\mathrm{s}} x)\right)\mathrm{d}F_{H_{\mathrm{s}}}\mathrm{d}F_{X}.\label{eqn:H(Y_s|G,X)}
\end{align}
Note that $\mathbb{I}\left(\eta;Y_{\mathrm{s}}|X\right) = \mathbb{I}\left(H_{\mathrm{s}};Y_{\mathrm{s}}|X\right)$ is due to the one-to-one mapping between $\eta$ and $H_{\mathrm{s}}^n$ (See \cite[Lemma 1]{Liu:2023}).

Suppose we define the set $\Omega_{P}$ to be the set of all input distributions $F_{X}$ that satisfy the average power constraint $P$. Given the definitions of CMI and SMI, it then follows that the communication and sensing capacities can be written as
\begin{align}
    C_{\mathrm{comm}}(P) = \max_{F_{X}\in \Omega_{P} }\; \mathbb{I}\left(X;Y_{\mathrm{c}}|H_{\mathrm{c}}\right)
\end{align}
and
\begin{align}
    C_{\mathrm{sense}}(P) = \max_{F_{X}\in \Omega_{P} }\; \mathbb{I}\left(H_{\mathrm{s}};Y_{\mathrm{s}}|X\right),
\end{align}
respectively. In the next section, we show that $C_{\mathrm{comm}}(P)$ and $C_{\mathrm{sense}}(P)$ are achieved by the same input distribution for any average power constraint $P$.

\subsection{Characterization of the Communication-Sensing Capacity Region}

We now identify the properties of the input distribution that simultaneously attains the communication capacity $C_{\mathrm{comm}}$ and sensing capacity $C_{\mathrm{sense}}$. To lay the groundwork for the derivation of these properties, we first present a key lemma that establishes the symmetry in the communication and sensing channel laws.
\begin{lemma}\label{lemma:channel_symmetry}
    For any $k\in \mathbb{Z}$, the function $W^{(\mathrm{a})}_{y_{l}}\left(z\right)$ satisfies
    \begin{align}
        W^{(\mathrm{a})}_{y_{l\oplus k}}\left(z\right)=W^{(\mathrm{a})}_{y_l}\left(z e^{-j\frac{\pi k}{2}}\right),
    \end{align}
    where $\oplus$ is a modulo-4 addition.
\end{lemma}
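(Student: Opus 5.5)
The plan is to reduce the statement to the single case $k=1$ and then verify that case directly from the definition \eqref{eq:gen_channel_law_1bit}. Both sides of the claimed identity depend on $k$ only modulo $4$: the index $l\oplus k$ is taken modulo 4, and $e^{-j\frac{\pi k}{2}}$ has period 4 in $k$. So it suffices to treat $k\in\{0,1,2,3\}$, and the case $k=0$ is trivial. Once $k=1$ is established for every $l$ and every $z\in\mathbb{C}$, the general case follows by induction. Indeed, applying the $k=1$ identity with $l$ replaced by $l\oplus(k-1)$ gives
\begin{align*}
W^{(\mathrm{a})}_{y_{l\oplus k}}(z)=W^{(\mathrm{a})}_{y_{l\oplus (k-1)}}\bigl(z e^{-j\frac{\pi}{2}}\bigr),
\end{align*}
and the induction hypothesis, applied at the point $z e^{-j\frac{\pi}{2}}$, then yields $W^{(\mathrm{a})}_{y_l}\bigl(z e^{-j\frac{\pi k}{2}}\bigr)$.

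For $k=1$, the key observation is that the labelling of $\mathcal{Y}$ satisfies $y_{l\oplus 1}=j\,y_l$. One checks this directly: $j(1+j)=-1+j$, $j(-1+j)=-1-j$, $j(-1-j)=1-j$, and $j(1-j)=1+j$. Write $z=a+jb$. Then $z e^{-j\pi/2}=-jz=b-ja$, so that $\Re\{-jz\}=b$ and $\Im\{-jz\}=-a$. Similarly, writing $y_l=c+jd$, we have $y_{l\oplus1}=-d+jc$, so $\Re\{y_{l\oplus1}\}=-d$ and $\Im\{y_{l\oplus1}\}=c$. Substituting into \eqref{eq:gen_channel_law_1bit}, the left-hand side becomes
\begin{align*}
W^{(\mathrm{a})}_{y_{l\oplus1}}(z)=Q\!\left(\frac{a d}{\sigma_{\mathrm{a}}/\sqrt2}\right)Q\!\left(-\frac{b c}{\sigma_{\mathrm{a}}/\sqrt2}\right),
\end{align*}
and the right-hand side becomes
\begin{align*}
W^{(\mathrm{a})}_{y_l}(-jz)=Q\!\left(-\frac{b c}{\sigma_{\mathrm{a}}/\sqrt2}\right)Q\!\left(\frac{a d}{\sigma_{\mathrm{a}}/\sqrt2}\right).
\end{align*}
These are the same two factors with the order swapped, so they are equal by commutativity of multiplication.

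No step is a real obstacle; the proof is bookkeeping. The one point requiring care is the rotation convention: it must be $y_{l\oplus1}=jy_l$ (counterclockwise quadrant indexing) paired with $z\mapsto z e^{-j\pi/2}$, and the argument depends on the real and imaginary factors exchanging roles under this rotation with consistent signs. An alternative, more conceptual route would be to interpret $W^{(\mathrm{a})}_{y_l}(z)$ as the probability that $z+Z_{\mathrm{a}}$ lands in the $(l+1)$-th quadrant. Circular symmetry of $Z_{\mathrm{a}}$ together with the fact that multiplication by $j$ maps each quadrant to the next would then give the same identity. I would state the algebraic verification as the main proof and mention this interpretation as intuition.
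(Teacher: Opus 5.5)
Your argument is correct. It rests on the same observation as the paper's proof: a rotation by a multiple of $\frac{\pi}{2}$ either preserves or swaps the two $Q$-factors in \eqref{eq:gen_channel_law_1bit}. What differs is the organization.

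The paper treats every $k$ at once. It writes $z=\rho_z e^{j\theta_z}$, parametrizes $y_l$ by the angle $\frac{\pi}{4}+\frac{\pi l}{2}$, expands $\cos(\theta_z-\frac{\pi k}{2})$ and $\sin(\theta_z-\frac{\pi k}{2})$, and splits on the parity of $k$. For even $k$ it shows that $\Re\{ze^{-j\frac{\pi k}{2}}\}\Re\{y_l\}=\Re\{z\}\Re\{y_{l\oplus k}\}$ and $\Im\{ze^{-j\frac{\pi k}{2}}\}\Im\{y_l\}=\Im\{z\}\Im\{y_{l\oplus k}\}$. For odd $k$ these two products equal $\Im\{z\}\Im\{y_{l\oplus k}\}$ and $\Re\{z\}\Re\{y_{l\oplus k}\}$, respectively, so the factors are exchanged.

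You instead verify only the generator $k=1$, in Cartesian form via $y_{l\oplus 1}=jy_l$, and recover the remaining cases by periodicity and induction. Your route needs a single case and no trigonometric identities, and it makes the cyclic-group structure of the rotations explicit. It also sidesteps the paper's parametrization of $y_l$, which drops a factor of $\sqrt{2}$; that slip is harmless there because the product identities are linear in the $y$'s. In exchange, the paper's route gives the explicit correspondence for each $k$ directly, without an inductive step.
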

\begin{proof}
    See Appendix \ref{proof:lemma1}.
\end{proof}
Intuitively, applying a $-\frac{\pi k}{2}$ phase rotation for some integer $k$ to the channel input $z$ corresponds to a modulo shift of $k$ to the output of the 1-bit I/Q quantizer. Next, we introduce a class of input distributions to which any optimal distribution $F_X^* \in \Omega_P$ belongs.
\begin{definition}\label{definition:pi_2_sym_dist}
    A distribution $F_{X}$ is said to be $\frac{\pi}{2}$-symmetric if
    \begin{align}
        F_{X}(x) \sim F_{X}(x e^{j\frac{\pi}{2}k})\qquad\forall k\in\mathbb{Z}.
    \end{align}
    Simply put, $F_{X}$ is invariant under any integer multiple of $\frac{\pi}{2}$ phase shift.
\end{definition}

Any distribution $F_{X}$ can be transformed to a $\frac{\pi}{2}$-symmetric distribution using the transformation
\begin{align}
    F_{X}^{\mathrm{s}}(x) = \frac{1}{4}\sum_{k = 0}^{3}F_{X}(x e^{j\frac{\pi}{2}k}). 
\end{align}
This transformation is used to establish the communication-sensing capacity region. Given Lemma \ref{lemma:channel_symmetry} and Definition \ref{definition:pi_2_sym_dist}, we are now ready to state the first main result of this paper.

\begin{theorem}\label{theorem:cap_region}
    Suppose the fading coefficient $H_{\mathrm{a}}\sim \mathcal{N}_{\mathbb{C}}(0,1)$, where $\mathrm{a}\in\{\mathrm{c},\mathrm{s}\}$, can be expressed in polar form as $H_{\mathrm{a}} = \sqrt{\Gamma_{\mathrm{a}}}e^{j\Theta_{\mathrm{a}}}$. Suppose further that $H_{\mathrm{c}}$ is available at the communication receiver. The communication capacity and sensing capacity of the 1-bit Gaussian ISAC channel are 
    \begin{align}\label{eq:C_comm}
        C_{\mathrm{comm}}(P) =& 2 - \mathbb{E}_{\Gamma_{\mathrm{c}},\Theta_{\mathrm{c}}}\Bigg\{\mathbb{H}_{\mathrm{b}}\left(Q\left(\sqrt{\frac{\Gamma_{\mathrm{c}} P\cos^2(\Theta_{\mathrm{c}})}{\sigma_{\mathrm{c}}^2/2}}\right)\right)\nonumber\\
        &\quad\quad+\mathbb{H}_{\mathrm{b}}\left(Q\left(\sqrt{\frac{\Gamma_{\mathrm{c}} P\sin^2(\Theta_{\mathrm{c}})}{\sigma_{\mathrm{c}}^2/2}}\right)\right)\Bigg\}
    \end{align}
    and
     \begin{align}\label{eq:C_sense}
        C_{\mathrm{sense}}(P) =& 2 - \mathbb{E}_{\Gamma_{\mathrm{s}},\Theta_{\mathrm{s}}}\Bigg\{\mathbb{H}_{\mathrm{b}}\left(Q\left(\sqrt{\frac{\Gamma_{\mathrm{s}} P\cos^2(\Theta_{\mathrm{s}})}{\sigma_{\mathrm{s}}^2/2}}\right)\right)\nonumber\\
        &\quad\quad+\mathbb{H}_{\mathrm{b}}\left(Q\left(\sqrt{\frac{\Gamma_{\mathrm{s}} P\sin^2(\Theta_{\mathrm{s}})}{\sigma_{\mathrm{s}}^2/2}}\right)\right)\Bigg\},
    \end{align}
    respectively, where $\mathbb{H}_{\mathrm{b}}\left(p\right)$ with $p\in[0,1]$ is the binary entropy function. The communication and sensing capacities are simultaneously attained by any $\frac{\pi}{2}$-symmetric distribution with amplitude $\sqrt{P}$. Consequently, the communication-sensing capacity region is the set of all communication rate and sensing rate pairs $(R_{\mathrm{c}},R_{\mathrm{s}})$ that satisfy
    \begin{subequations}\label{eq:comm_sensing_region}
    \begin{align}
        0 \leq R_{\mathrm{c}} \leq& C_{\mathrm{comm}}(P)\\
        0 \leq R_{\mathrm{s}} \leq& C_{\mathrm{sense}}(P).
    \end{align}
    \end{subequations}
\end{theorem}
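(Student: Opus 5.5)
We bound each mutual information by splitting it into an output-entropy term and a conditional-entropy term and treating the two separately. The output-entropy terms are at most $\log 4 = 2$ bits, since $Y_{\mathrm{a}}$ takes four values. The conditional-entropy terms factor along the I/Q branches, which is what produces the binary-entropy expressions. The key step is to show that every bound is attained with equality by a constant-amplitude $\frac{\pi}{2}$-symmetric input, and that this holds for both links at once.

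\emph{Step 1 (output entropy).} For any $F_X$ we have $\mathbb{H}_{F_X}(Y_{\mathrm{c}}|H_{\mathrm{c}})\le 2$ and $\mathbb{H}_{F_X}(Y_{\mathrm{s}}|X)\le 2$.

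We show both bounds are tight under $\frac{\pi}{2}$-symmetry. On the communication side, Lemma~\ref{lemma:channel_symmetry} together with the invariance $F_X(x)\sim F_X(xe^{j\pi k/2})$ gives, after the change of variables $x\mapsto xe^{-j\pi k/2}$,
\begin{align*}
p_{Y_{\mathrm{c}}|H_{\mathrm{c}}}(y_{l\oplus k}|h_{\mathrm{c}};F_X)=\int W^{(\mathrm{c})}_{y_l}(h_{\mathrm{c}}xe^{-j\pi k/2})\,\mathrm{d}F_X=p_{Y_{\mathrm{c}}|H_{\mathrm{c}}}(y_l|h_{\mathrm{c}};F_X).
\end{align*}
So the conditional output is uniform for every $h_{\mathrm{c}}$, and the first entropy equals $2$.

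The sensing side uses a different mechanism. Because $H_{\mathrm{s}}$ is circularly symmetric, $H_{\mathrm{s}}x$ has the same law as $H_{\mathrm{s}}xe^{-j\pi k/2}$. Lemma~\ref{lemma:channel_symmetry} then gives $p_{Y_{\mathrm{s}}|X}(y_{l\oplus k}|x)=p_{Y_{\mathrm{s}}|X}(y_l|x)$ for every $x$, regardless of $F_X$. Hence $\mathbb{H}_{F_X}(Y_{\mathrm{s}}|X)=2$ for all inputs. It follows that both capacities reduce to minimizing the respective conditional entropies.

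\emph{Step 2 (conditional entropy).} By the product form of $W^{(\mathrm{a})}_{y_l}$, the I and Q bits are conditionally independent given $z=h_{\mathrm{a}}x$. Hence
\begin{align*}
\sum_l\xi\big(W^{(\mathrm{a})}_{y_l}(z)\big)=\mathbb{H}_{\mathrm{b}}\Big(Q\big(\tfrac{|\Re z|}{\sigma_{\mathrm{a}}/\sqrt2}\big)\Big)+\mathbb{H}_{\mathrm{b}}\Big(Q\big(\tfrac{|\Im z|}{\sigma_{\mathrm{a}}/\sqrt2}\big)\Big).
\end{align*}
Write $x=re^{j\phi}$ and $h_{\mathrm{a}}=\sqrt{\Gamma_{\mathrm{a}}}e^{j\Theta_{\mathrm{a}}}$. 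Since $\Theta_{\mathrm{a}}$ is uniform and independent of $\Gamma_{\mathrm{a}}$, the phase $\phi$ is absorbed into $\Theta_{\mathrm{a}}$. The inner expectation therefore depends on $x$ only through $r^2=t$. Call it $g_{\mathrm{a}}(t)$:
\begin{align*}
g_{\mathrm{a}}(t)=\mathbb{E}_{\Gamma_{\mathrm{a}},\Theta_{\mathrm{a}}}\Big\{\mathbb{H}_{\mathrm{b}}\Big(Q\big(\sqrt{\Gamma_{\mathrm{a}} t\cos^2\Theta_{\mathrm{a}}/(\sigma_{\mathrm{a}}^2/2)}\big)\Big)+\mathbb{H}_{\mathrm{b}}\Big(Q\big(\sqrt{\Gamma_{\mathrm{a}} t\sin^2\Theta_{\mathrm{a}}/(\sigma_{\mathrm{a}}^2/2)}\big)\Big)\Big\}.
\end{align*}
The conditional entropy is then $\mathbb{E}_{F_X}[g_{\mathrm{a}}(|X|^2)]$.

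\emph{Step 3 (optimize over amplitude; main obstacle).} We need
\begin{align*}
\mathbb{E}[g_{\mathrm{a}}(|X|^2)]\ge g_{\mathrm{a}}(P)\quad\text{whenever }\mathbb{E}|X|^2\le P.
\end{align*}
Since $g_{\mathrm{a}}$ is nonincreasing, this follows from Jensen's inequality once $g_{\mathrm{a}}$ is shown to be convex in $t$. Establishing that convexity is the main obstacle.

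Our first attempt is to show that $u\mapsto \mathbb{H}_{\mathrm{b}}(Q(\sqrt{cu}))$ is convex in $u\ge0$ for each $c\ge 0$; convexity of $g_{\mathrm{a}}$ then follows by averaging over $\Gamma_{\mathrm{a}},\Theta_{\mathrm{a}}$. This per-branch convexity is the classical fact behind optimality of BPSK at SNR $u$ in the real 1-bit AWGN channel, and it can be checked by differentiating twice in $u$. Should pointwise convexity fail near $u=0$, the fallback is to exploit the averaging over the Rayleigh gain $\Gamma_{\mathrm{a}}$. Its exponential density lets us write $g_{\mathrm{a}}(t)$ as a Laplace-type integral in $t$, whose convexity is easier to verify directly.

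\emph{Step 4 (conclusion).} A constant-amplitude $\sqrt P$, $\frac{\pi}{2}$-symmetric input attains equality in both Step 1 and Step 3 for $\mathrm{a}=\mathrm{c}$ and $\mathrm{a}=\mathrm{s}$ simultaneously. This yields the stated expressions $C_{\mathrm{comm}}(P)=2-g_{\mathrm{c}}(P)$ and $C_{\mathrm{sense}}(P)=2-g_{\mathrm{s}}(P)$. Because a single input achieves both maxima, the corner point $(C_{\mathrm{comm}},C_{\mathrm{sense}})$ is achievable, and the capacity region is the rectangle given in \eqref{eq:comm_sensing_region}.
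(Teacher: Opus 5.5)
Your proposal is correct and follows essentially the same route as the paper: uniform output via Lemma~\ref{lemma:channel_symmetry} (for every $h_{\mathrm{c}}$ under a $\frac{\pi}{2}$-symmetric input, and for every $x$ on the sensing link because $H_{\mathrm{s}}$ is circularly symmetric), I/Q factorization of the conditional entropy into binary entropies, and then Jensen's inequality plus monotonicity of $w\mapsto\mathbb{H}_{\mathrm{b}}(Q(\sqrt{w}))$ to force constant amplitude $\sqrt{P}$. The convexity you flag as the main obstacle is exactly the fact the paper imports from Singh et al., and it holds on all of $[0,\infty)$, so your fallback is not needed; your direct bound of the output entropy by $2$, and absorbing the input phase into the uniform $\Theta_{\mathrm{a}}$ before applying Jensen, is if anything slightly cleaner than the paper's symmetrization step and its conditioning on $\Theta_X$.
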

\begin{proof}
    See Appendix \ref{proof:theorem1}.
\end{proof}

A key insight from this result is that the communication capacity $C_{\mathrm{comm}}(P)$ and sensing capacity $C_{\mathrm{sense}}(P)$ of the 1-bit Gaussian ISAC channel can be achieved simultaneously. In other words, there is no inherent trade-off between communication and sensing under this scenario. Moreover, the input distribution that achieves both $C_{\mathrm{comm}}(P)$ and $C_{\mathrm{sense}}(P)$ is not unique. Any $2^b$-phase shift keying with $b \geq 2$ and amplitude $\sqrt{P}$ achieve both capacities. More generally, these capacities can also be attained by continuous input distributions, such as a uniform distribution over a circle of radius $\sqrt{P}$.

\section{Optimal Power Control Strategy in 1-bit ISAC System under CSIT}
\label{section:power_alloc}

\subsection{Formulation of the Power Control Policy}\label{subsection:P_opt_formulation}
The previous section established that any $\frac{\pi}{2}$-symmetric distribution  with single amplitude level of $\sqrt{P}$ simultaneously achieves the communication and sensing capacities when the communication CSI $H_{\mathrm{c}}$ is available at the receiver. We now consider the case when the communication CSI is also available at the transmitter. Under this scenario, the ISAC transmitter can adapt its transmission according to the instantaneous state of the communication channel. This enables the transmitter to rotate $X$ by $-\Theta_{\mathrm{c}}$ to compensate for the phase of $H_{\mathrm{c}}$. Consequently, an achievable communication rate can be expressed as
\begin{align}\label{eq:C_comm_CSIT}
C_{\mathrm{comm}}^{(\mathrm{CSIT})}\left(P_{\gamma_{\mathrm{c}}}\right) = 2 - 2 \mathbb{E}_{\Gamma_{\mathrm{c}}}\Bigg\{\mathbb{H}_{\mathrm{b}}\left(Q\left(\sqrt{\frac{\Gamma_{\mathrm{c}} P_{\Gamma_{\mathrm{c}}}}{\sigma_{\mathrm{c}}^2}}\right)\right)\Bigg\}
\end{align}
and is attained by a rotated QPSK modulation scheme with amplitude $\sqrt{P_{\gamma_{\mathrm{c}}}}$ \cite{Krone_2010,bernardo2022_phasequant}. The function $P_{\gamma_{\mathrm{c}}}$ is the power control policy, which depends on the instantaneous communication channel gain $\gamma_{\mathrm{c}} = |h_{\mathrm{c}}|^2$ and satisfies $\int_{0}^{\infty}f_{\Gamma_{\mathrm{c}}}(\gamma_{\mathrm{c}})\cdot P_{\gamma_{\mathrm{c}}}\;\mathrm{d}\gamma_{\mathrm{c}} \leq P$. Note that the expectation over the communication channel phase is absent in \eqref{eq:C_comm_CSIT} due to the pre-rotation by $-\Theta_{\mathrm{c}}$.



\begin{figure}
\centering
    \includegraphics[width=1\linewidth]{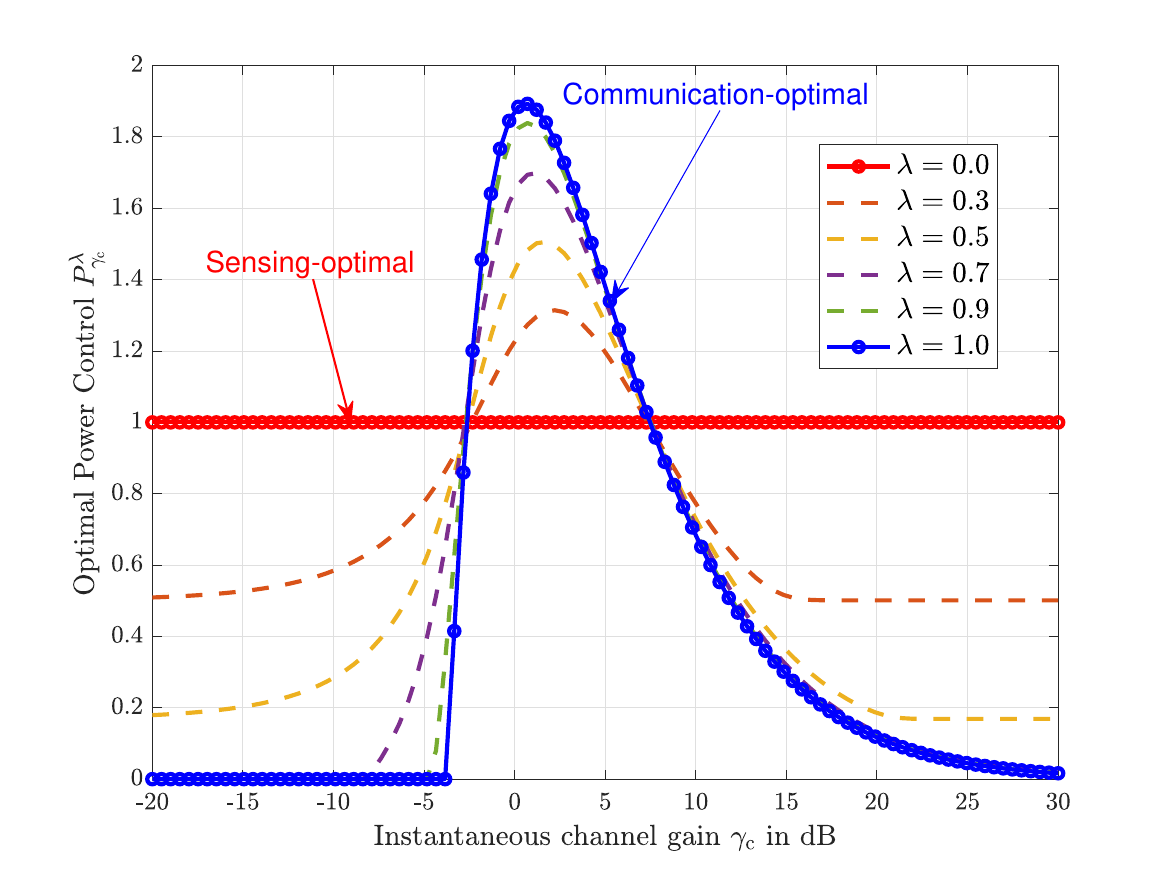}
    \caption{Optimal power control $P^{\lambda}_{\gamma_{\mathrm{c}}}$ vs. instantaneous communication channel realization $\gamma_{\mathrm{c}}$ at 0 dB SNR for both communication and sensing channels.}
    \label{fig:pc_plot}
\end{figure}

The power control policy $P_{\gamma_{\mathrm{c}}}$ can be optimized to attain the communication capacity under CSIT. The optimal power control policy derived in \cite{bernardo2022_phasequant} can be specialized to the 1-bit I/Q quantizer to attain the ergodic capacity of a 1-bit Rayleigh fading channel with CSIT. However, when the objective also accounts for the sensing performance, this communication-optimal power control policy is no longer optimal. To capture the design of the power control scheme for joint communication and sensing, we consider an optimization problem that maximizes the weighted sum of CMI and SMI under an average transmit power constraint. Specifically, we formulate the objective function
\begin{align}
    \mathcal{C}_{\lambda}\left(P_{\gamma_{\mathrm{c}}}\right) = \lambda \cdot C_{\mathrm{comm}}^{\mathrm{(CSIT)}}\left(P_{\gamma_{\mathrm{c}}}\right) + (1-\lambda)\cdot C_{\mathrm{sense}}\left(P_{\gamma_{\mathrm{c}}}\right), \label{eq:obj_func}
\end{align}
where $\lambda\in[0,1]$ controls the relative priority between communication and sensing tasks. For a given $\lambda$, we seek the optimal power control policy, denoted $P_{\gamma_{\mathrm{c}}}^*$, that maximizes the objective function $\mathcal{C}_{\lambda}\left(P_{\gamma_{\mathrm{c}}}\right)$ while satisfying the average transmit power constraint. The following theorem presents the optimal power control policy.
\begin{theorem}\label{theorem:power_alloc}
    Define the function
    \begin{align}
        D\left(k,\beta\right) = -\sqrt{\frac{k}{8\pi \beta}} e^{-\frac{k \beta}{2}}\log\left(\frac{1 - Q(\sqrt{k\beta})}{Q(\sqrt{k\beta}}\right), \label{eq:D_func}
    \end{align}
    which is the derivative of $\mathbb{H}_{\mathrm{b}}\left(Q\left(\sqrt{k\beta}\right)\right)$ with respect to $\beta$. Define another function $G_{\lambda,\gamma_{\mathrm{c}}}\left(\alpha\right)$ whose full expression is written in \eqref{eq:G_lambda} and is invertible with respect to $\alpha$ for fixed $\mu$ and $\gamma_{\mathrm{c}}$. Then, the power control policy that maximizes $\mathcal{C}_{\lambda}\left(P_{\gamma_{\mathrm{c}}}\right)$ for a given $\lambda\in [0,1]$ is
\begin{figure*}[t!]
    \begin{align}\label{eq:G_lambda}
        G_{\lambda,\gamma_{\mathrm{c}}}\left(\alpha\right) = 2\lambda\cdot D\left(\frac{\gamma_{\mathrm{c}}}{\sigma_{\mathrm{c}}^2},\alpha\right) + (1-\lambda)\cdot \mathbb{E}_{\Gamma_{\mathrm{s}},\Theta_{\mathrm{s}}}\left\{D\left(\frac{\Gamma_{\mathrm{s}}\cos^2(\Theta_{\mathrm{s}})}{\sigma_{\mathrm{s}}^2/2},\alpha\right) + D\left(\frac{\Gamma_{\mathrm{s}}\sin^2(\Theta_{\mathrm{s}})}{\sigma_{\mathrm{s}}^2/2},\alpha\right)\right\}
    \end{align}
    \hrulefill
\end{figure*}
    \begin{align}\label{eq:P_opt}
        P_{\gamma_{\mathrm{c}}}^{\lambda} = \left[G_{\lambda,\gamma_{\mathrm{c}}}^{-1}\left(\mu\right)\right]^+,
    \end{align} 
    where $\mu$ satisfies 
    \begin{align}
        \int_{0}^{\infty} [G_{\lambda,\gamma_{\mathrm{c}}}^{-1}(\mu)]^+\cdot f_{\Gamma_{\mathrm{c}}}(\gamma_{\mathrm{c}})\;\mathrm{d}\gamma_{\mathrm{c}} \leq P. \label{eq:p_constraint}
    \end{align}
\end{theorem}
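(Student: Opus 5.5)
The plan is to treat the weighted-sum maximization as a pointwise (in $\gamma_{\mathrm{c}}$) concave program and solve it with a Lagrangian/KKT argument. First, I will write the sensing term under CSIT explicitly. The sensing channel does not depend on $H_{\mathrm{c}}$, but the transmit power does. Conditioned on $\Gamma_{\mathrm{c}}=\gamma_{\mathrm{c}}$, the transmitter still uses a $\frac{\pi}{2}$-symmetric constant-amplitude input (rotated QPSK) of amplitude $\sqrt{P_{\gamma_{\mathrm{c}}}}$. The sensing receiver sees only $X$, and $H_{\mathrm{s}}$ is circularly symmetric. Applying the computation behind Theorem~\ref{theorem:cap_region} to each fixed amplitude then gives
\[
C_{\mathrm{sense}}(P_{\gamma_{\mathrm{c}}}) = 2-\mathbb{E}_{\Gamma_{\mathrm{c}}}\mathbb{E}_{\Gamma_{\mathrm{s}},\Theta_{\mathrm{s}}}\Big\{\mathbb{H}_{\mathrm{b}}\big(Q(\sqrt{\tfrac{\Gamma_{\mathrm{s}}\cos^2\Theta_{\mathrm{s}}}{\sigma_{\mathrm{s}}^2/2}P_{\Gamma_{\mathrm{c}}}})\big)+\mathbb{H}_{\mathrm{b}}\big(Q(\sqrt{\tfrac{\Gamma_{\mathrm{s}}\sin^2\Theta_{\mathrm{s}}}{\sigma_{\mathrm{s}}^2/2}P_{\Gamma_{\mathrm{c}}}})\big)\Big\}.
\]
With this, $\mathcal{C}_\lambda$ becomes $\mathbb{E}_{\Gamma_{\mathrm{c}}}\{g_{\gamma_{\mathrm{c}}}(P_{\Gamma_{\mathrm{c}}})\}$ for a scalar function $g_{\gamma_{\mathrm{c}}}(\alpha)$. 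It is a constant minus $2\lambda\mathbb{H}_{\mathrm{b}}(Q(\sqrt{\gamma_{\mathrm{c}}\alpha/\sigma_{\mathrm{c}}^2}))$ minus $(1-\lambda)$ times the expected sensing entropies.

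Next, I will verify the derivative formula \eqref{eq:D_func}. Use $\mathbb{H}_{\mathrm{b}}'(p)=\log\frac{1-p}{p}$ and $\frac{d}{d\beta}Q(\sqrt{k\beta})=-\frac{1}{\sqrt{2\pi}}e^{-k\beta/2}\cdot\frac{\sqrt{k}}{2\sqrt{\beta}}$. By the chain rule these give exactly $D(k,\beta)$. Hence $g'_{\gamma_{\mathrm{c}}}(\alpha)=-G_{\lambda,\gamma_{\mathrm{c}}}(\alpha)$, where differentiation under $\mathbb{E}_{\Gamma_{\mathrm{s}},\Theta_{\mathrm{s}}}$ is justified by dominated convergence. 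The justification uses the Gaussian factor $e^{-k\beta/2}$ together with the logarithm growing only linearly in $k\beta$, since $\log\frac{1-Q(t)}{Q(t)}\sim t^2/(2\ln 2)$.

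The main obstacle is concavity and the invertibility claim, i.e.\ that $-D(k,\cdot)$ is strictly decreasing, so $\alpha\mapsto\mathbb{H}_{\mathrm{b}}(Q(\sqrt{k\alpha}))$ is convex. Substituting $t=\sqrt{k\beta}$, the quantity $-D$ equals $\frac{k}{2\sqrt{2\pi}}\cdot\frac{e^{-t^2/2}}{t}\log\frac{1-Q(t)}{Q(t)}$. Up to the factor $k$, I need $t\mapsto \frac{\phi(t)}{t}\log\frac{1-Q(t)}{Q(t)}$ to be decreasing on $(0,\infty)$. At $t\to0$ this tends to a finite positive value, since $\log\frac{1-Q}{Q}\approx \frac{4}{\sqrt{2\pi}\ln 2}t$, and it decays to zero as $t\to\infty$. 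I would first try to prove monotonicity directly by showing the log-derivative $-t-\frac1t+\frac{\phi(t)/(Q(1-Q))}{\ln\frac{1-Q}{Q}}$ is negative, using Mills-ratio bounds. If that is too delicate, I will cite the known result from \cite{bernardo2022_phasequant} for the 1-bit communication-only case, which asserts exactly this monotonicity. Given it, $G_{\lambda,\gamma_{\mathrm{c}}}$ is a positive combination of such terms, so it is strictly decreasing (hence invertible) in $\alpha$, and each $g_{\gamma_{\mathrm{c}}}$ is strictly concave.

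Finally, I will form the Lagrangian $\mathbb{E}_{\Gamma_{\mathrm{c}}}\{g_{\Gamma_{\mathrm{c}}}(P_{\Gamma_{\mathrm{c}}})-\mu P_{\Gamma_{\mathrm{c}}}\}+\mu P$ and maximize pointwise over $P_{\gamma_{\mathrm{c}}}\ge0$. By concavity the KKT conditions are necessary and sufficient. If $G_{\lambda,\gamma_{\mathrm{c}}}(0)\le\mu$, the optimum is $P_{\gamma_{\mathrm{c}}}=0$. Otherwise the optimum solves $G_{\lambda,\gamma_{\mathrm{c}}}(P_{\gamma_{\mathrm{c}}})=\mu$, which yields \eqref{eq:P_opt}; here I extend $G^{-1}$ to be nonpositive when $\mu\ge G(0)$, so the $[\cdot]^+$ form covers both cases. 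The objective is increasing in power, so the constraint is active. Monotone convergence shows that the left side of \eqref{eq:p_constraint} is continuous and decreasing in $\mu$, which fixes $\mu$ by equality.
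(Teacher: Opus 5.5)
Your proposal follows the paper's own route: form the Lagrangian, reduce stationarity to a pointwise condition in $\gamma_{\mathrm{c}}$, and clip at zero. What you add is what the paper's appendix leaves implicit: the explicit CSIT sensing term, convexity of $\beta\mapsto\mathbb{H}_{\mathrm{b}}(Q(\sqrt{k\beta}))$ (which gives invertibility of $G_{\lambda,\gamma_{\mathrm{c}}}$ and makes the stationary point a global pointwise maximizer), KKT for the boundary case, and the choice of $\mu$.

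One sign slip needs fixing. Since $D<0$ and $D(k,\cdot)$ is increasing, $G_{\lambda,\gamma_{\mathrm{c}}}$ is negative and strictly \emph{increasing}, not decreasing. With your Lagrangian ($\mu\ge 0$) and $g_{\gamma_{\mathrm{c}}}'=-G_{\lambda,\gamma_{\mathrm{c}}}$, the KKT conditions give $P_{\gamma_{\mathrm{c}}}=0$ iff $\mu\ge -G_{\lambda,\gamma_{\mathrm{c}}}(0)$, and otherwise $G_{\lambda,\gamma_{\mathrm{c}}}(P_{\gamma_{\mathrm{c}}})=-\mu$; your written conditions $G_{\lambda,\gamma_{\mathrm{c}}}(0)\le\mu$ and $G_{\lambda,\gamma_{\mathrm{c}}}(P_{\gamma_{\mathrm{c}}})=\mu$ contradict this. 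You recover the displayed form $[G_{\lambda,\gamma_{\mathrm{c}}}^{-1}(\mu)]^+$ only after renaming $-\mu$ as $\mu$. The theorem's $\mu$ is then negative, and $G^{-1}_{\lambda,\gamma_{\mathrm{c}}}$ must be extended to be nonpositive for $\mu\le G_{\lambda,\gamma_{\mathrm{c}}}(0)$. The paper's condition $[G_{\lambda,\gamma_{\mathrm{c}}}(P_{\gamma_{\mathrm{c}}})-\mu]f_{\Gamma_{\mathrm{c}}}(\gamma_{\mathrm{c}})=0$, derived from $\mathcal{C}_\lambda-\mu(\cdot)$, has the same slip.
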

\begin{proof}
    See Appendix \ref{proof:theorem2}
\end{proof}
Theorem \ref{theorem:power_alloc} characterizes the optimal power control policy through necessary and sufficient optimality conditions, revealing how the weighted contributions of communication and sensing jointly determine the power allocation given the communication CSI. In the next section, we provide numerical results that illustrate how the weighting parameter $\lambda$ influences the power control policy in Theorem \ref{theorem:power_alloc}.

\subsection{Numerical Results}\label{subsection:numerical_results}

In this section, we examine the impact of the weighting parameter $\lambda$ on the optimal power control policy derived in Theorem \ref{theorem:power_alloc}. Since the optimal power control policy is given implicitly through the inverse function of $G_{\lambda,\gamma_{\mathrm{c}}}(\alpha)$ and does not admit a closed-form expression, we compute the power allocation numerically. In particular, for a fixed $\lambda$, we first determine the Lagrange multiplier $\mu$ that satisfies the average power constraint in \eqref{eq:p_constraint}, and then solve for $P^{\lambda}_{\gamma_\mathrm{c}}$ numerically using a nested bisection method.

Figure \ref{fig:pc_plot} depicts the optimal power control policy for ISAC with 1-bit I/Q ADC as a function of $\gamma_{\mathrm{c}}$ under different $\lambda$. The noise powers of the communication and sensing channels are set to $\sigma_{\mathrm{c}}^2 = \sigma_{\mathrm{s}}^2 = 1$ and the average power constraint is set to $P = 1$, i.e., 0 dB signal-to-noise ratio (SNR). At $\lambda = 1$ (communication-optimal), there exists a cut-off value of $\gamma_{\mathrm{c}}$ in the power control policy, below which the ISAC transmitter does not transmit anything. Moreover, the power control policy does not allocate power to communication fading states with extremely high $\gamma_{\mathrm{c}}$. This can be attributed to the fact that the capacity is bounded above by 2 bits/cu in the high SNR regime. Thus, there is little or no benefit in allocating power on channel states with large $\gamma_{\mathrm{c}}$. This behavior is reminiscent of the power allocation derived for phase-quantized communication receivers \cite{bernardo2022_phasequant}. As $\lambda$ decreases, the power allocation strategy begins to assign power to both very low and very high values of $\gamma_{\mathrm{c}}$, gradually approaching the sensing-optimal scheme, i.e., no power adaptation given $\gamma_{\mathrm{c}}$. 

Figures \ref{fig:Rc_CSIT} and \ref{fig:Rs_CSIT} illustrate the communication rates and sensing rates, respectively, for $\lambda = 0$ (sensing-optimal) $\lambda = 0.5$ (equal priority), and $\lambda = 1.0$ (communication-optimal). The SNR is varied by fixing $\sigma_{\mathrm{c}}^2 = 1$ and $\sigma_{\mathrm{s}}^2 = 1$ and then sweeping the average power constraint $P$ accordingly. It can be observed that the communication rate under CSIT and $\lambda = 0$ exceeds the communication rate under CSIR. The gap between the two is more pronounced in the high SNR regime. This is due to the significant gains obtained by pre-rotating the transmission to combat $\Theta_{\mathrm{c}}$. Additional increase in communication rate is achieved as $\lambda \rightarrow 1$, reflecting the increased prioritization of the power control policy in the communication performance. Meanwhile, the sensing rate $R_{\mathrm{s}}$ under CSIT and $\lambda = 0$ coincides with the CSIR sensing capacity. This is expected since the sensing objective does not exploit variations in the communication channel gain. When $\lambda = 1$, the sensing rate falls below the CSIR sensing capacity and saturates at $\approx1.7$ bits/cu in the high SNR regime. This can be attributed to the fact that $R_{\mathrm{c}}$ has already achieved the 2 bits/cu upper bound around 20 dB SNR when $\lambda = 1$. With zero weight assigned to the sensing objective, the objective function $\mathcal{C}_{\lambda = 1}(P_{\gamma_{\mathrm{c}}})$ cannot be further improved by increasing the sensing rate. This saturation can be avoided by selecting $\lambda < 1$, as shown in the $\lambda = 0.999$ curve in Figure \ref{fig:Rs_CSIT}. Finally, we observe that assigning equal priority to sensing and communication objectives achieves near-optimal performance for both $R_{\mathrm{c}}$ and $R_{\mathrm{s}}$.

\section{Conclusion}
\label{section:conclusion}

In this paper, we investigated the fundamental limits of 1-bit ISAC systems. Under the communication CSIR scenario, the communication–sensing capacity region is characterized and we showed that there is no inherent trade-off between communication and sensing performance. In particular, both CMI and SMI can be simultaneously maximized by a $\frac{\pi}{2}$-symmetric input distribution with amplitude $\sqrt{P}$. Under the communication CSIT scenario, we showed that the ability to adapt the transmission introduces a non-trivial trade-off between communication and sensing. By formulating a weighted MI optimization problem, we derived the optimal power control policy and demonstrated that it deviates significantly from the classical water-filling strategy due to the sensing requirement. Numerical results illustrated how the optimal policy evolves with the weighting parameter, transitioning from a communication-centric allocation to a more uniform sensing-oriented strategy. These findings highlight the critical role of communication CSI in shaping ISAC system design under coarse quantization. Future work may extend this framework to multi-antenna systems, multi-user settings, and more general quantization architectures.

\begin{figure}[t]
\centering
    \includegraphics[width=.96\linewidth]{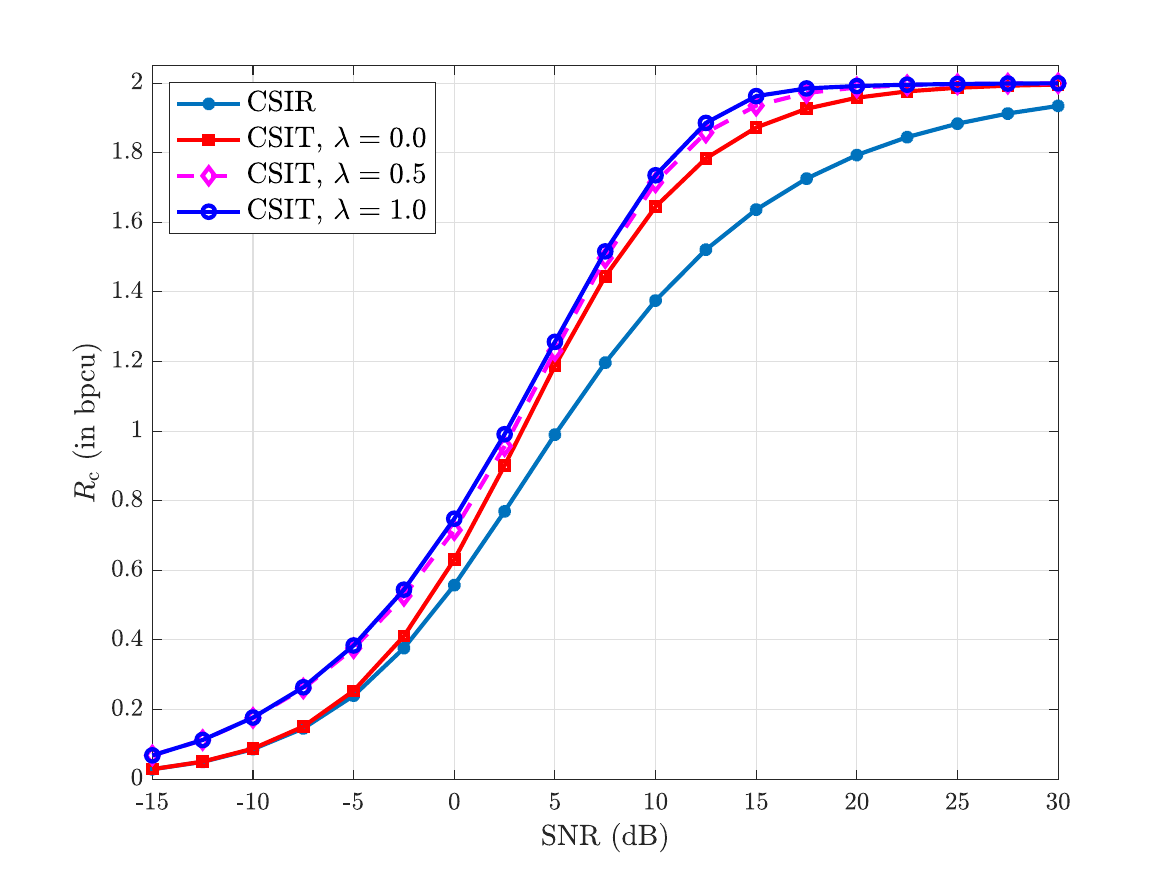}
    \caption{Communication rate vs. SNR for different $\lambda$ under CSIT. Superimposed in this plot is the communication capacity under CSIR.}
    \label{fig:Rc_CSIT}
\end{figure}
\begin{figure}[t]
\centering
    \includegraphics[width=.96\linewidth]{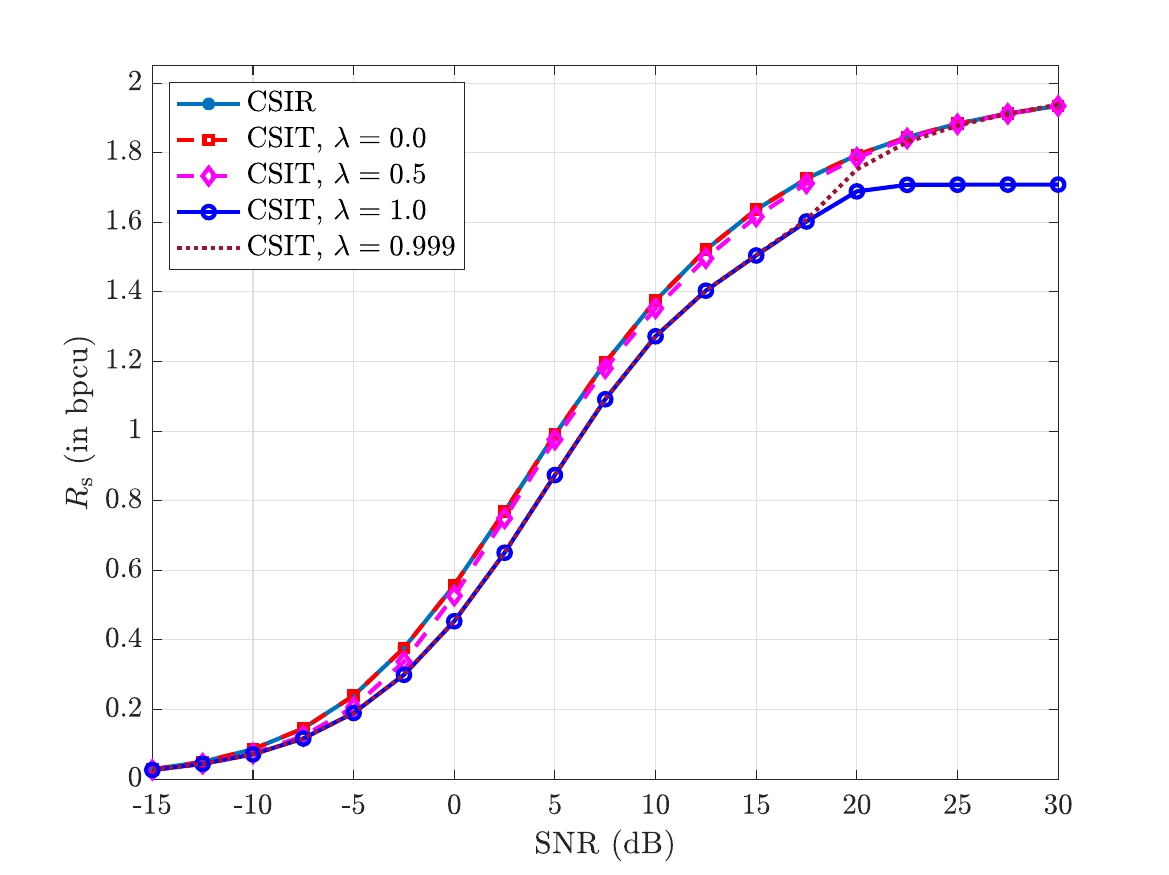}
    \caption{Sensing rate vs. SNR for different $\lambda$ under CSIT. Superimposed in this plot is the sensing capacity under CSIR.}
    \label{fig:Rs_CSIT}
\end{figure}


\begin{appendices}
 \section{Proof of Lemma \ref{lemma:channel_symmetry}}
 \label{proof:lemma1}
 
Note that the dependence of $W_{y_l}^{(\mathrm{a})}(z)$ on $z$ and $y_l$ is through the terms $\Re\{z\}\Re\{y_l\}$ and $\Im\{z\}\Im\{y_l\}$. Let $z = \rho_{z}e^{j\theta_{z}}$ be the polar form of $z$ and let $y_l = \cos\left(\frac{\pi}{4} + \frac{\pi l}{2}\right) + j\sin\left(\frac{\pi}{4} + \frac{\pi l}{2}\right)$ be the Cartesian form of $y_{l}$. We can reduce the term $\Re\{z e^{-j\frac{\pi k}{2}}\}\Re\left\{y_l\right\}$ into
\begingroup
\allowdisplaybreaks
\begin{align}\label{eq:real_part}
    &\Re\{z e^{-j\frac{\pi k}{2}}\}\Re\left\{y_l\right\}= \rho_{z}\cos\left(\theta_z - \frac{\pi k}{2}\right)\cos\left(\frac{\pi}{4} + \frac{\pi l}{2}\right)\nonumber\\
     &\quad= \frac{\rho_{z}\left[\cos\left(\theta_z\right)\cos\left(\frac{\pi k}{2}\right) + \sin\left(\theta_z\right)\sin\left(\frac{\pi k}{2}\right)\right]\cos\left(\frac{\pi}{4} + \frac{\pi l}{2}\right)}{2}\nonumber\\
     &\quad=\begin{cases}
         \rho_{z}\cos(\theta_{z})\cos\left(\frac{\pi}{4}+\frac{\pi(l+k)}{2}\right),\qquad \text{$k$ is even}\\
         \rho_{z}\sin(\theta_{z})\sin\left(\frac{\pi}{4}+\frac{\pi(l+k)}{2}\right),\qquad \text{$k$ is odd}
     \end{cases}\nonumber\\
     &\quad=\begin{cases}
         \Re\{z\}\Re\{y_{l\oplus k}\},\qquad \text{$k$ is even}\\
         \Im\{z\}\Im\{y_{l\oplus k}\},\qquad \text{$k$ is odd}
     \end{cases}.
\end{align}
Similarly, the term $\Im\{z\}\Im\{y_l\}$ can be simplified to
\begin{align}\label{eq:imag_part}
    &\Im\{z e^{-j\frac{\pi k}{2}}\}\Im\left\{y_l\right\}= \rho_{z}\sin\left(\theta_z - \frac{\pi k}{2}\right)\sin\left(\frac{\pi}{4} + \frac{\pi l}{2}\right)\nonumber\\
     &\quad= \frac{\rho_{z}\left[\sin\left(\theta_z\right)\cos\left(\frac{\pi k}{2}\right) - \cos\left(\theta_z\right)\sin\left(\frac{\pi k}{2}\right)\right]\sin\left(\frac{\pi}{4} + \frac{\pi l}{2}\right)}{2}\nonumber\\
     &\quad=\begin{cases}
         \rho_{z}\sin(\theta_{z})\sin\left(\frac{\pi}{4}+\frac{\pi(l+k)}{2}\right),\qquad \text{$k$ is even}\\
         \rho_{z}\cos(\theta_{z})\cos\left(\frac{\pi}{4}+\frac{\pi(l+k)}{2}\right),\qquad \text{$k$ is odd}
     \end{cases}\nonumber\\
     &\quad=\begin{cases}
         \Im\{z\}\Im\{y_{l\oplus k}\},\qquad \text{$k$ is even}\\
         \Re\{z\}\Re\{y_{l\oplus k}\},\qquad \text{$k$ is odd}
     \end{cases}.
\end{align}
\endgroup
The proof is completed by applying \eqref{eq:real_part} and \eqref{eq:imag_part} into \eqref{eq:gen_channel_law_1bit}.

\section{Proof of Theorem \ref{theorem:cap_region}}

\begingroup
\allowdisplaybreaks
Let $Z = H_{\mathrm{a}}\cdot X$ with distribution $F_{Z}(z)$ and define the distribution $F_{Z}^{\mathrm{s}}$ to be
\begin{align}
    F_{Z}^{\mathrm{s}}(z) = \frac{1}{4}\sum_{i = 0}^{3}F_{Z}(ze^{j\frac{\pi k}{2}}),
\end{align}
which is a $\frac{\pi}{2}$-symmetric distribution derived from $F_{Z}$. The output PMF of $Y_{\mathrm{a}}$ induced by $Z\sim F_{Z}^{\mathrm{s}}(z)$ on $W_{y_l}^{(\mathrm{a})}(z)$ is
\begin{align}
    p_{Y_{a}}(y_l) =& \int_{\mathbb{C}}W_{y_l}^{(\mathrm{a})}(z)\;\mathrm{d}F_{Z}^{\mathrm{s}}= \frac{1}{4}\int_{\mathbb{C}}\sum_{k = 0}^{3}W_{y_l}^{(\mathrm{a})}\left(z e^{-j\frac{\pi k}{2}}\right)\mathrm{d}F_{Z}\nonumber\\
    =& \frac{1}{4}\int_{\mathbb{C}}\underbrace{\sum_{k = 0}^{3}W_{y_{l\oplus k}}^{(\mathrm{a})}(z)}_{=1}\mathrm{d}F_{Z}(z) = \frac{1}{4}
\end{align}
for all $y_l\in\mathcal{Y}$. The second equality is obtained using the relationship between $F_{Z}^{\mathrm{s}}$ and $F_{Z}$. The third equality is obtained using Lemma \ref{lemma:channel_symmetry}. As a result, $p_{Y_{a}}(y_l)$ is uniformly distributed when $Z$ is a $\frac{\pi}{2}$-symmetric distribution and the corresponding output entropy is maximized, i.e., $\mathbb{H}_{F_{Z}^{\mathrm{s}}}\left(Y_{\mathrm{a}}\right) = 2$. Simultaneously, the conditional entropy induced by $F_{Z}^{\mathrm{s}}$, denoted $\mathbb{H}_{F_{Z}^{\mathrm{s}}}(Y_{\mathrm{a}}|Z)$, can be reduced to
\begin{align}\label{eq:conditional_ent_sym}
   \mathbb{H}_{F_{Z}^{\mathrm{s}}}(Y_{\mathrm{a}}|Z)&= \sum_{l = 0}^{3}\int_{\mathbb{C}}\xi\left(W_{y_l}^{(\mathrm{a})}\left(z\right)\right)\;\mathrm{d}F_{Z}^{\mathrm{s}}\nonumber\\
    &= \frac{1}{4}\sum_{l = 0}^{3}\sum_{k = 0}^{3}\int_{\mathbb{C}}\xi\left(W_{y_l}^{(\mathrm{a})}\left(z e^{-j\frac{\pi k}{2}}\right)\right)\;\mathrm{d}F_{Z}\nonumber\\
    &= \frac{1}{4}\sum_{l = 0}^{3}\sum_{k = 0}^{3}\int_{\mathbb{C}}\xi\left(W_{y_{l\oplus k}}^{(\mathrm{a})}\left(z \right)\right)\;\mathrm{d}F_{Z}\nonumber\\
    &= \sum_{l' = 0}^{3}\int_{\mathbb{C}}\xi\left(W_{y_{l'}}^{(\mathrm{a})}\left(z \right)\right)\mathrm{d}F_{Z}\nonumber\\
    &= \mathbb{H}_{F_{Z}}(Y_{\mathrm{a}}|Z),
\end{align}
where we introduce the variable $l' = l\oplus k$ in the fourth equality. Since entropy is invariant of the labeling, we get \eqref{eq:conditional_ent_sym}.
Consequently, this implies that $\mathbb{I}\left(Z_1;Y_{\mathrm{a}}\right) \geq \mathbb{I}\left(Z_2;Y_{\mathrm{a}}\right)$ for $Z_{1}\sim F_{Z}^{\mathrm{s}}$ and $Z_2\sim F_{Z}$ and the capacity-achieving $Z$ should be $\frac{\pi}{2}$-symmetric. The $\frac{\pi}{2}$-symmetry on $Z = H_{\mathrm{c}}\cdot X$ is attained in the communication channel for any fixed $H_{\mathrm{c}}$ when $F_{X}$ is a $\frac{\pi}{2}$-symmetric distribution. Meanwhile, the $\frac{\pi}{2}$-symmetry on $Z = H_{\mathrm{s}}\cdot X$ is attained in the sensing channel for any fixed $X$ since $H_{\mathrm{s}}\sim \mathcal{N}_{\mathbb{C}}(0,1)$ is also $\frac{\pi}{2}$-symmetric distribution.

When $F_{X}$ is a $\frac{\pi}{2}$-symmetric distribution, the capacity boils down to finding the $\frac{\pi}{2}$-symmetric input distribution $F_{X}$ in $\Omega_{P}$ such that $\mathbb{H}(Y_{\mathrm{a}}| Z)$ (or equivalently, $\mathbb{H}(Y_{\mathrm{a}}| H_{\mathrm{a}}, X)$) is minimized. Since the real and imaginary part of $Y_{\mathrm{a}}$ are independent conditioned on $Z = \sqrt{\Gamma_{Z}}e^{j\Theta_{Z}}$ under the channel law in \eqref{eq:gen_channel_law_1bit}, the conditional entropy can be written as
\begin{align}
    \mathbb{H}(Y_{\mathrm{a}}|Z) =& \mathbb{E}_{\Theta_{Z},\Gamma_{Z}}\Bigg\{\mathbb{H}_{\mathrm{b}}\left(Q\left(\sqrt{\frac{\Gamma_{Z}\cos^2\Theta_{Z}}{\sigma_{\mathrm{a}}^2/2}}\right)\right)\nonumber\\
    &\qquad\quad+ \mathbb{H}_{\mathrm{b}}\left(Q\left(\sqrt{\frac{\Gamma_{Z}\sin^2\Theta_{Z}}{\sigma_{\mathrm{a}}^2/2}}\right)\right)\Bigg\},
\end{align}
where the binary entropy function is used since the real and imaginary part of $Y_{\mathrm{a}}$ each has two possible outputs and their probabilities are described by the $Q$-function. Using $Z = H_{\mathrm{a}}\cdot X$ with $X = \sqrt{\Gamma_{X}}e^{j\Theta_{X}}$ and $\Theta_{\mathrm{U}} = \Theta_{\mathrm{a}} + \Theta_{X}$, we get
\begin{align}
    &\mathbb{H}(Y_{\mathrm{a}}|H_{\mathrm{a}},X)\nonumber\\
    &=\mathbb{E}_{\Gamma_{\mathrm{a}},\Theta_{\mathrm{U}}}\Bigg\{\mathbb{E}_{\Gamma_{X}|\Theta_{\mathrm{X}}}\Bigg\{\mathbb{H}_{\mathrm{b}}\left(Q\left(\sqrt{\frac{\Gamma_{\mathrm{U}}\Gamma_{X}\cos^2\left(\Theta_{\mathrm{U}}\right)}{\sigma_{\mathrm{a}}^2/2}}\right)\right)\nonumber\\
&\qquad+\mathbb{H}_{\mathrm{b}}\left(Q\left(\sqrt{\frac{\Gamma_{\mathrm{a}}\Gamma_{X}\sin^2\left(\Theta_{\mathrm{U}}\right)}{\sigma_{\mathrm{a}}^2/2}}\right)\right)\Bigg\}\Bigg\}\nonumber\\
&\geq\mathbb{E}_{\Gamma_{\mathrm{a}},\Theta_{\mathrm{a}}}\Bigg\{\mathbb{H}_{\mathrm{b}}\left(Q\left(\sqrt{\frac{\Gamma_{\mathrm{a}}\mathbb{E}_{\Gamma_{X}|\Theta_{\mathrm{X}}}\left\{\Gamma_{X}\right\}\cos^2\left(\Theta_{\mathrm{a}}\right)}{\sigma_{\mathrm{a}}^2/2}}\right)\right)\nonumber\\
&\qquad+\mathbb{H}_{\mathrm{b}}\left(Q\left(\sqrt{\frac{\Gamma_{\mathrm{a}}\mathbb{E}_{\Gamma_{X}|\Theta_{\mathrm{X}}}\left\{\Gamma_{X}\right\}\sin^2\left(\Theta_{\mathrm{a}}\right)}{\sigma_{\mathrm{a}}^2/2}}\right)\right)\Bigg\}.
\end{align}
The inequality is obtained by noting that $\mathbb{H}_{\mathrm{b}}(Q(\sqrt{w}))$ is convex in $w$ \cite{singh_2009} so Jensen's inequality can be applied. Note that equality can be achieved here when $\mathbb{E}_{\Gamma_{X}|\Theta_{\mathrm{X}}}\left\{\Gamma_{X}\right\}$ is constant, i.e. $F_{X}$ has a constant amplitude.  Moreover, $\Theta_{\mathrm{U}}$ and $\Theta_{\mathrm{a}}$ are both uniformly distributed so we can replace $\Theta_{\mathrm{U}}$ by $\Theta_{\mathrm{a}}$ inside the expectation. Since $\mathbb{H}_{\mathrm{b}}(Q(\sqrt{w}))$ is also a decreasing function of $w$, the smallest $\mathbb{H}(Y_{\mathrm{a}}|H_{\mathrm{a}},X)$ is attained by setting $\mathbb{E}_{\Gamma_{X}|\Theta_{\mathrm{X}}}\left\{\Gamma_{X}\right\} = P$, i.e., transmit at full power. 

Specializing the above result to the communication capacity and sensing capacity gives us \eqref{eq:C_comm} and \eqref{eq:C_sense}. Since the same input distribution is optimal for both 1-bit communication and sensing channels, we get the capacity region in \eqref{eq:comm_sensing_region}.


\endgroup

 \label{proof:theorem1}
\section{Proof of Theorem \ref{theorem:power_alloc}}

To maximize the objective function in (\ref{eq:obj_func}) under the average transmit power constraint, we form the Lagrangian
\begin{align}
 \mathcal{L}(P_{\gamma_{\mathrm{c}}},\mu) = \mathcal{C}_\lambda(P_{\gamma_\mathrm{c}}) - \mu \Big(\int_0^\infty P_{\gamma_\mathrm{c}} f_{\Gamma_c}(\gamma_\mathrm{c}) \mathrm{d}\gamma_\mathrm{c}-P \Big),
\end{align}
where $\mu$ is the Lagrangian multiplier. Taking the functional derivative of $\mathcal{L}(P_{\gamma_{\mathrm{c}}},\mu)$ with respect to the power control policy $P_{\gamma_\mathrm{c}}$ and imposing the stationary condition gives
\begin{align}
\frac{\partial}{\partial P_{\gamma_\mathrm{c}}} \Big[\mathcal{C}_\lambda(P_{\gamma_\mathrm{c}}) - \mu \Big(\int_0^\infty P_{\gamma_\mathrm{c}} f_{\Gamma_c}(\gamma_\mathrm{c}) \mathrm{d}\gamma_\mathrm{c}-P \Big)\Big] = 0.
\end{align}
Using the linearity of expectation, the derivatives of the functionals $C_{\mathrm{comm}}^{(\mathrm{CSIT})}(P_{\gamma_{\mathrm{c}}})$ and $C_{\mathrm{sense}}(P_{\gamma_{\mathrm{c}}})$ with respect to $P_{\gamma_\mathrm{c}}$ can be described in terms of the function $D(k,\beta)$ in \eqref{eq:D_func}. Consequently, the stationary condition reduces to the point-wise condition
\begin{align}
    [G_{\lambda,\gamma_\mathrm{c}}(P_{\gamma_\mathrm{c}})-\mu] f_{\Gamma_\mathrm{c}}(\gamma_c) = 0,
\end{align}
where $G_{\lambda,\gamma_\mathrm{c}}(P_{\gamma_{\mathrm{c}}})$ is defined in \eqref{eq:G_lambda}. Solving for $P_{\gamma_{\mathrm{c}}}$ with the constraint $P_{\gamma_{\mathrm{c}}} > 0$ yields \eqref{eq:P_opt}.


 \label{proof:theorem2}
 \end{appendices}


\renewcommand*{\bibfont}{\footnotesize}
\begingroup
\footnotesize  
\printbibliography
\endgroup






\end{document}